\definecolor{darkgray}{gray}{0.4}
\definecolor{lightgray}{gray}{0.75}
\definecolor{midgray}{gray}{0.6}
\newcommand{\twoclass}[1]{\ensuremath{(#1)\text{-}\mathsf{Color}}}
\newcommand{\Twoclass}[1]{Two-Class \ensuremath{(#1)}-Coloring}
\newcommand{\kColor}[0]{\ensuremath{k\text{-}\mathsf{Color}}}
\newcommand{\maxIS}[0]{\ensuremath{\mathsf{MaxIS}}}
\newcommand{\maxCut}[0]{\ensuremath{\mathsf{MaxCut}}}
\newcommand{\maxKCut}[0]{\ensuremath{\mathsf{Max}\text{-}k\text{-}\mathsf{Cut}}}
\title{\Twoclass{r,k}:\\ Coloring with Service Guarantees\footnote{The authors of this paper are alphabetically ordered.}} 
\titlerunning{\Twoclass{r,k}}
\author{Pál András Papp}{ETH Zurich, Switzerland}{apapp@ethz.ch}{}{}
\author{Roland Schmid}{ETH Zurich, Switzerland}{roschmi@ethz.ch}{}{}
\author{Valentin Stoppiello}{ETH Zurich, Switzerland}{stoppiev@ethz.ch}{}{}
\author{Roger Wattenhofer}{ETH Zurich, Switzerland}{wattenhofer@ethz.ch}{}{}
\authorrunning{P.\,A.\ Papp and R.\ Schmid and V.\ Stoppiello and R.\ Wattenhofer}
\keywords{graph coloring, conflict coloring, NP-complete, approximation complexity}
\begin{document}

\maketitle

\begin{abstract}
This paper introduces the \emph{\Twoclass{r,k}} problem: Given a fixed number of $k$ colors, such that only $r$ of these $k$ colors allow conflicts, what is the minimal number of conflicts incurred by an optimal coloring of the graph?

We establish that the family of \Twoclass{r,k} problems is NP-complete for any $k \geq 2$ when $(r, k) \neq (0,2)$. Furthermore, we show that \Twoclass{r,k} for $k \geq 2$ colors with one ($r = 1$) relaxed color cannot be approximated to any constant factor ($\notin$ APX). Finally, we show that \Twoclass{r,k} with $k \geq r \geq 2$ colors is APX-complete.
\end{abstract}


\section{Introduction}
\label{sec:introduction}

Graph coloring is a fundamental mathematical problem with various applications in computer science. Usually, we want to color the nodes of a graph with a minimum number of colors such that no two adjacent nodes have the same color. 

In many real world applications, however, the number of available colors is strictly limited. For example, we are given a fixed set of frequencies, and we need to assign a frequency (color) to each wireless transmitter (node in a graph), such that wireless transmitters that interfere with each other (that are connected by an edge in the graph) do not use the same frequency. 
In cases where the set of colors is fixed, we may need to accept a certain number of conflicts, i.e., neighboring nodes which have the same color.

In practice, we often have coloring problems that additionally require strict service guarantees, e.g.\ priority service for premium customers, or unobstructed communication channels for maintenance and emergency personnel.
Such applications can be modeled by partitioning the set of colors into (i) \emph{proper} colors for guaranteed conflict-free service and (ii) \emph{relaxed} colors which allow conflicts. In general, we can have $k$ colors, with $r$ of these $k$ colors being relaxed and $k-r$ of them being proper; we refer to this setting as the \Twoclass{r,k} problem.

\paragraph*{Contributions}
In this paper, we introduce and motivate the \emph{\Twoclass{r,k}} problem, that is, the problem of coloring a given graph with a fixed number of $k$ colors of which $r$ colors are relaxed to allow conflicts.
We establish relations between the \Twoclass{r,k} problem and several well-studied problems, such as Maximum Independent Set and \maxKCut{}.
For our main contributions, we show that:
\begin{itemize}
    \item the family of \Twoclass{r,k} problems is NP-complete, except for the special case when $(r,k) = (0,2)$,
    \item \Twoclass{1,k}, i.e., with one relaxed color, cannot be approximated to any constant factor within polynomial time ($\notin$ APX),
    \item \Twoclass{r,k} with multiple relaxed colors is APX-complete, i.e., \Twoclass{r,k} can be approximated to some constant (but not to an arbitrary constant) in polynomial time.
\end{itemize}

\section{Related Work}

Graph coloring is one of the most fundamental and well-studied problems of computer science. A general survey of fundamental graph coloring results is available in \cite{coloringBook1} or \cite{coloringBook2}. Even though it has been studied intensively since the early 1970s, graph coloring still receives significant attention today, with results ranging from distributed algorithms \cite{furtherAreas1} to modified variants \cite{coloringBook1} or heuristic solutions for such problems \cite{furtherAreas2}. However, the vast majority of these studies only consider proper colorings of graphs, i.e.\ when monochromatic edges are not allowed at all.

The most well-studied relaxed coloring problem is \emph{Defective Coloring}, introduced in \cite{DefectiveIntro}. A defective coloring is a (possibly relaxed) coloring of a graph $G$ such that each node of $G$ has at most $d$ conflicts. Given a graph $G$ and a fixed number of colors $k$, the goal of the problem is to find a $k$-coloring with the minimal possible $d$ value. Defective coloring is also known to be NP-complete \cite{DefectiveHard}, and has been studied extensively on specific classes of graphs \cite{Defective1, Defective2, Defective3}.

In the case when all color classes allow conflicts, a solution minimizing the number of conflicts is equivalently a solution maximizing the number of edges that go between different color classes. Thus the \Twoclass{k,k} problem is essentially a reformulation of the \maxKCut{} problem. The \maxKCut{} problem is also a thoroughly studied problem, which is known to be both NP-complete and APX-complete \cite{NPCompleteProblemsBook,ApproxHard, APXcomplete}. Further theoretical work mostly investigates the best possible constant-factor approximation for the problem \cite{MaxkCut1, MaxkCut2, MaxkCut3}. Many of these studies focus primarily on the weighted version of the problem, and obtain their results for the unweighted graphs as a special case of this.

Another related question is the Maximum $k$-Colorable Subgraph problem: given a graph, the task here is to select the largest subset $V_0$ of the vertices such that the subgraph induced by $V_0$ still admits a proper $k$-coloring \cite{ColorSubgraph1, ColorSubgraph2}. Note that the term ``Maximum $k$-Colorable Subgraph problem'' is quite ambiguous in the literature, as it is often also used to refer to the problem of finding a $k$-colorable (not necessarily induced) subgraph with maximally many edges, which is essentially only another reformulation of \maxKCut{}.

Our work is also closely connected to the Maximum Independent Set (\maxIS{}) problem, which is known to be NP-complete \cite{NPCompleteProblemsBook}, and also not approximable to any constant \cite{ApproxHard}.


\section{\Twoclass{r,k}} 
We have seen that several relaxations of the traditional graph coloring problem have been considered in the literature. In this work, we introduce the problem of \emph{\Twoclass{r,k}}, which follows a utilitarian approach in the sense that we minimize the global number of conflicts.
To begin with, let us introduce some notation.

As usual, a graph~$G = (V, E)$ consists of a set of $n$ vertices~$V$ (with $|V| = n$) and a set of undirected edges~$E \subseteq {V \choose 2}$. The degree of a node $v \in V$ is denoted by $d(v)$.

For such a graph, a \emph{$k$-coloring} is a mapping of its $n$ vertices to a set of $k$ colors.
Traditionally, any set of nodes that is assigned the same color is required to form an independent set in the graph. 

\begin{definition}
    The assignment of the same color to two adjacent vertices is called a \emph{conflict}.
    Similarly, the edge connecting these two vertices is called a \emph{conflict edge} (for being a witness of the conflict). All other edges are called \emph{covered edges}.
\end{definition}
Graph Coloring does not allow conflicts at all. In our problems, however, the $k$ colors are sorted into two groups, and we only require the colors in one of the groups to form independent sets.
\begin{definition}
     A color is said to be a \emph{proper color} if the vertices mapped to this color are required to form an independent set (i.e., have no conflicts). Otherwise, the color is called a \emph{relaxed color}, which might have (or might not have) conflicts.
\end{definition}
Given these definitions, we can now define the colorings that we are interested in:
\begin{definition}
    In the \emph{\Twoclass{r,k}} problem (with $r \leq k$), a feasible solution is a $k$-coloring of the input graph such that we can divide the $k$ colors into a group of $r$~relaxed colors and a group of $k-r$ proper colors.
\end{definition}
The natural goal in such a problem is to color the graph with as few conflicts as possible. However, when discussing the connection to related problems, it is beneficial to look at this problem in the dual perspective of having as many covered edges as possible. Note that maximizing the number of covered edges in a coloring indeed produces a solution that also minimizes the number of conflicts. Thus, we study the following problem in the paper.

\begin{definition} The \emph{\Twoclass{r,k}} problem is defined as follows:
\begin{itemize}
\item \textbf{Decision problem:} Given a graph $G$ and numbers $r, k$, and $c$, is there a feasible solution of \twoclass{r,k} which covers at least $c$ edges?
\item \textbf{Optimization problem:} Given a graph $G$ and numbers $r$ and $k$, what is the maximal number of edges in $G$ that can be covered by a feasible solution of \twoclass{r,k}?
\end{itemize}
\end{definition}

\noindent Let us finish with some further remarks on notations. Generally, given an optimization problem $P$, we denote the value of the optimal solution on graph $G$ by $\texttt{opt}_{P}(G)$. For an arbitrary feasible solution $\mathcal{S}$ of the problem, the value of $\mathcal{S}$ is denoted by $\texttt{val}_{P}(\mathcal{S})$.

While we study all the above problems on simple graphs only, we also use multigraphs as a tool in one of our proofs. A multigraph is a graph that allows parallel edges between nodes, i.e.\ where $E$ is a multiset of elements from ${V \choose 2}$.

\paragraph*{Example: \Twoclass{r,k} is Not Equivalent to \maxIS{}}
Let us briefly compare \twoclass{1,2} and the Maximum Independent Set (\maxIS{}) problem on the example graph of \Cref{fig:exemplify-difference-MaxIS}. Both problems can be formulated as a coloring problem with one relaxed (white) and one proper (black) color, so the set of feasible solutions for the two problems are identical. However, while \maxIS{} aims to maximize the number of black vertices, \twoclass{1,2} maximizes the number of edges covered by the black vertices instead. \Cref{fig:exemplify-difference-MaxIS} shows that these problems may have different optimal solutions: the optimum of \twoclass{1,2} has only 5 black nodes and covers 13 edges (left), while the Maximum Independent Set contains 10 nodes and covers 11 edges (right).
\begin{figure}[tbh]
    \centering
    \begin{subfigure}[c]{0.51\textwidth}
        \centering
        \includegraphics[height=2cm]{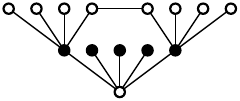}
        \subcaption{Optimal \twoclass{1,2} solution covering 13 edges.}
    \end{subfigure}
    \hfill
    \begin{subfigure}[c]{0.47\textwidth}
        \centering
        \includegraphics[height=2cm]{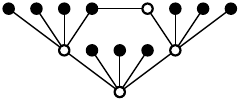}
        \subcaption{Maximum independent set of size $10$.}
    \end{subfigure}
    \caption{The \twoclass{1,2} problem is not equivalent to the Maximum Independent Set problem.}
    \label{fig:exemplify-difference-MaxIS}
\end{figure}

\section{Complexity Results (NP-Completeness)} 
\label{sec:NP-completeness}

In this section, we show that  \twoclass{r,k} is NP-complete for $k > 2$ with $0 \leq r \leq k$, and also for $k = 2$ with $r \in \{1,2\}$. To begin with, observe that deciding \twoclass{r,k} is in NP: A solution to \twoclass{r,k} covering at least $c$ edges can be verified in polynomial time by traversing the set of covered edges.

Now, for the case $k > 2$, recall that deciding the traditional graph coloring problem \kColor{} is NP-complete for $k > 2$ colors:
\begin{theorem}[\cite{NPCompleteProblemsBook}]
    It is NP-complete to decide whether an input graph $G$ admits a $k$-coloring for a given number of proper colors $k > 2$.
\end{theorem}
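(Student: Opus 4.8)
The plan is to establish the two directions separately: containment in NP, which is routine, and NP-hardness, which I would first obtain for $k=3$ and then lift to all $k>3$ by a simple gadget. For containment, observe that a $k$-coloring is an assignment of one of $k$ labels to each of the $n$ vertices, hence a certificate of size $O(n \log k)$, and checking that no edge is monochromatic takes $O(|E|)$ time; thus \kColor{} is in NP for every fixed $k$.

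For hardness with $k=3$, I would reduce from 3-SAT. Given a formula $\varphi$ with variables $x_1,\dots,x_m$ and clauses $C_1,\dots,C_t$, build a graph $G_\varphi$ as follows. Introduce a ``palette triangle'' on three special vertices, which I think of as the colors \emph{True}, \emph{False}, and \emph{Base}; since they form a triangle, in any proper $3$-coloring they receive the three distinct colors, so the colors may be named after them. For each variable $x_i$ add two vertices representing the literals $x_i$ and $\overline{x_i}$, joined to each other and both joined to \emph{Base}; this forces one of the two to be colored \emph{True} and the other \emph{False}, encoding a truth assignment. For each clause attach a standard ``OR-gadget'': a fixed-size subgraph on the three literal vertices of the clause together with a handful of auxiliary vertices, wired so that it extends to a proper $3$-coloring exactly when at least one of the three literal vertices is colored \emph{True} (the usual $3$-input OR obtained by chaining two $2$-input OR-gadgets). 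Then $G_\varphi$ is $3$-colorable iff $\varphi$ is satisfiable, and $G_\varphi$ has size polynomial (indeed linear) in $|\varphi|$, so \twoclass{0,3} is NP-hard; combined with membership, it is NP-complete.

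To move from $k=3$ to an arbitrary $k>3$, I would reduce $3$-Coloring to \kColor{}: given $G$, form $G'$ by adding $k-3$ fresh vertices that form a clique among themselves and are each adjacent to every original vertex of $G$. In any proper $k$-coloring of $G'$, the new clique uses $k-3$ pairwise distinct colors, none of which can appear on any vertex of $G$, so the original vertices must be properly colored with the remaining $3$ colors; conversely, any proper $3$-coloring of $G$ extends to a $k$-coloring of $G'$. Hence $G'$ is $k$-colorable iff $G$ is $3$-colorable, and the construction is clearly polynomial.

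The step I expect to be the main obstacle is verifying correctness of the clause OR-gadget in the $k=3$ reduction: checking that the small fixed subgraph really is $3$-colorable precisely under the intended condition on its three terminals, and that assembling all the gadgets into $G_\varphi$ creates no unintended monochromatic edges. Everything else — NP membership and the lifting from $3$ to $k$ colors — is bookkeeping.
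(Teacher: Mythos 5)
This theorem is not proved in the paper at all: it is quoted as a classical result with a citation (it is the Karp/Garey--Johnson--Stockmeyer result), and the paper only \emph{uses} it, via the observation that setting $c=|E|$ turns a \twoclass{r,k} oracle into a \kColor{} decider. Your reconstruction is exactly the standard textbook proof being cited, and its outline is correct: NP membership by guessing the coloring and checking edges; hardness of $3$-coloring by reduction from 3-SAT with a palette triangle, literal pairs tied to \emph{Base}, and clause OR-gadgets; and the lift to $k>3$ by joining a $(k-3)$-clique to every original vertex, which forces the clique to absorb $k-3$ colors that cannot reappear in $G$, leaving exactly $3$ colors for $G$ --- that step is airtight as you state it. The one piece you leave unverified, the clause gadget, is indeed the crux, but the construction you name does work: the $2$-input OR-gadget (a triangle $u,v,w$ with $u$ adjacent to input $a$ and $v$ adjacent to input $b$) has the property that, when $a,b\in\{\text{True},\text{False}\}$ (guaranteed because literal vertices are adjacent to \emph{Base}), the output $w$ can be colored \emph{True} iff at least one input is \emph{True}, and is forced to \emph{False} when both inputs are \emph{False}; chaining two of these and joining the final output to both \emph{Base} and \emph{False} forces it to be \emph{True}, giving exactly the ``extends to a proper $3$-coloring iff some literal is True'' behaviour you require, with no unintended edges among gadgets since auxiliary vertices are private to each clause. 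So your argument is a correct, self-contained proof of a statement the paper merely imports; spelling out that one gadget check would complete it.
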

To show that \twoclass{r,k} is NP-complete for $k > 2$, we may thus show that a polynomial-time algorithm for solving \twoclass{r,k} could be used to compute a solution for 
\kColor{} in polynomial time. To that end, note that setting $c = |E|$ directly yields a decision procedure for \kColor{}.
\begin{lemma}
    \label{lem:k-larger-2}
    For $k > 2$, it is NP-complete to decide whether an input graph $G$ admits a solution to \twoclass{r,k} covering at least $c$ edges.
\end{lemma}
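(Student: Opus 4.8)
The plan is to reduce from \kColor{} for $k>2$ colors, which the cited theorem tells us is NP-complete, exactly as the paragraph preceding the lemma suggests. Membership in NP has already been argued, so only hardness remains. The natural idea is: given an instance $G$ of \kColor{} (is $G$ properly $k$-colorable?), ask whether $G$ admits a \twoclass{r,k} solution covering at least $c = |E|$ edges. Covering all $|E|$ edges means there are no conflict edges at all, i.e., every color class — including the $r$ relaxed ones — is an independent set. Such a coloring is exactly a proper $k$-coloring of $G$ (the proper/relaxed distinction is irrelevant once no color class has a conflict). Hence $G$ is $k$-colorable if and only if the \twoclass{r,k} instance $(G, r, k, |E|)$ is a yes-instance, and the reduction is trivially polynomial-time (it does not even modify the graph).

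The main technical point to get right is that this works uniformly for every $r$ with $0 \le r \le k$, including $r=k$ (all colors relaxed) and $r=0$ (all colors proper, so \twoclass{0,k} literally is \kColor{}). The argument above does not use $r$ at all: a zero-conflict coloring is a proper coloring regardless of how many colors are nominally relaxed, and conversely a proper $k$-coloring is a feasible \twoclass{r,k} solution (put any $r$ of the classes in the relaxed group) with zero conflicts, hence $|E|$ covered edges. So I would state the equivalence as: $\texttt{opt}_{\twoclass{r,k}}(G) = |E|$ if and only if $G$ is $k$-colorable, and then observe that a polynomial-time decision procedure for \twoclass{r,k} with threshold $c=|E|$ decides \kColor{}.

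Concretely the write-up would be: (1) recall membership in NP (already done above); (2) describe the reduction — map the \kColor{} instance $G$ to the \twoclass{r,k} instance $(G, c=|E|)$, which is clearly polynomial; (3) prove correctness in both directions via the observation that a \twoclass{r,k} solution covers all $|E|$ edges iff it has no conflict iff every color class is independent iff it is a proper $k$-coloring; (4) conclude NP-completeness for all $k>2$ and all $0\le r\le k$.

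I do not anticipate a real obstacle here — the content is essentially the remark already made in the text ("setting $c=|E|$ directly yields a decision procedure for \kColor{}"), and the lemma is just formalizing it. The only thing to be careful about is not to overclaim: this reduction handles $k>2$ only, since \kColor{} is polynomial for $k\le 2$; the $k=2$ cases $r\in\{1,2\}$ need the separate arguments promised elsewhere in Section~\ref{sec:NP-completeness}. If one wanted extra robustness one could note the reduction is in fact a many-one (Karp) reduction and not merely a Turing reduction, but phrasing it as "a polynomial-time algorithm for \twoclass{r,k} yields one for \kColor{}" as the surrounding text does is already sufficient for NP-completeness.
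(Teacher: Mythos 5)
Your proposal is correct and follows exactly the paper's argument: the paper proves this lemma by the same reduction from \kColor{}, setting $c=|E|$ and observing that covering all edges forces every color class (relaxed or not) to be independent. Your write-up merely spells out both directions of this equivalence in more detail than the paper does, which is fine.
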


The case for $k = 2$ must be treated separately, as \kColor{} for $k = 2$ colors (deciding bipartiteness of a graph) can be solved in linear time, e.g.\ using a breadth-first search. Hence, \twoclass{0,2} can be decided in linear time for $c = |E|$. For $c < |E|$, note that if there exists a proper 2-coloring of the input graph $G$, then it covers all $|E|$ edges and is hence a solution to \twoclass{0,2} for any $c < |E|$. If there exists no proper 2-coloring of $G$ but a solution $\mathcal{S}$ to \twoclass{0,2} covering at least $c < |E|$ edges, $\mathcal{S}$ would have to induce at least one conflict. However, in \twoclass{0,2} there are no relaxed colors that would allow for any conflicts. Consequently, such a solution $\mathcal{S}$ may not exist and \twoclass{0,2} can be decided for any $c$ in linear time.

Perhaps surprisingly, we will show that \twoclass{r,k} is also NP-complete for $k = 2$ and $r > 0$. In other words, relaxing at least one color creates a gap in the computational complexity (assuming P $\neq$ NP). 

To show that \twoclass{r,2} is also NP-hard to decide for $r > 0$ and some given minimum number of covered edges $c$, note that \twoclass{2,2} is equivalent to the well-known \maxCut{} problem which is known to be NP-complete \cite{NPCompleteProblemsBook}. For \twoclass{1,2}, we propose a reduction to the Maximum Independent Set problem. To begin with, let us recall that:
\begin{theorem}[\cite{NPCompleteProblemsBook}]
    It is NP-complete to decide whether an input graph $G$ contains an independent set of some given size $k$.
\end{theorem}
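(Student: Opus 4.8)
The plan is to establish both directions of NP-completeness. Membership in NP is immediate: a candidate independent set of size $k$ serves as a polynomial-size certificate, and one verifies in polynomial time that the chosen vertices are pairwise non-adjacent and that there are exactly $k$ of them. The substantive part is NP-hardness, which I would prove by a polynomial-time reduction from a known NP-complete problem; the cleanest choice is 3-SAT.

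First I would describe the gadget construction. Given a 3-CNF formula $\phi$ with clauses $C_1, \dots, C_m$ over Boolean variables $x_1, x_2, \dots$, I build a graph $G$ as follows. For each clause $C_j = (\ell_{j,1} \vee \ell_{j,2} \vee \ell_{j,3})$ I introduce three vertices, one per literal occurrence, and connect them into a triangle (the clause gadget); this forces any independent set to contain at most one vertex from each clause. In addition, for every pair of vertices whose literals are complementary (one is $x_i$, the other $\neg x_i$) I add an edge, which I call a \emph{consistency edge}. Finally I set the target size to $k = m$. The construction clearly runs in time polynomial in the size of $\phi$.

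Then I would prove the equivalence: $\phi$ is satisfiable if and only if $G$ has an independent set of size $m$. For the forward direction, given a satisfying assignment, each clause contains at least one literal evaluating to true; picking one such vertex per clause yields $m$ vertices, no two of which lie in the same triangle, and no two of which are complementary (a consistent assignment cannot make both $x_i$ and $\neg x_i$ true), so they form an independent set. For the reverse direction, an independent set of size $m$ must use exactly one vertex from each of the $m$ triangles; the consistency edges guarantee that the selected literals never contradict one another, so they can be read off as a partial truth assignment, which satisfies every clause because each clause contributes a true literal.

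The main obstacle will be the careful handling of the reverse direction. I expect the routine bookkeeping to lie in arguing that exactly one vertex per clause gadget is selected (which follows by pigeonholing $m$ independent vertices into $m$ disjoint triangles) and that the consistency edges suffice to exclude contradictory literal choices; the only genuinely delicate point is confirming that the partial assignment induced by the selected literals can always be completed to a total assignment, which is immediate since the remaining variables may be set arbitrarily. As an alternative route, one could instead reduce from Clique via vertex-complementation, observing that $G$ has an independent set of size $k$ exactly when the complement graph $\overline{G}$ has a clique of size $k$; this is a one-line reduction, but it presupposes that Clique is already known to be NP-complete.
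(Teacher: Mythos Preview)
Your argument is the standard, correct textbook reduction from 3-SAT to Independent Set (with the alternative one-line reduction from Clique also noted correctly). There is nothing to compare against, however: the paper does not prove this theorem at all. It is stated with a citation to \cite{NPCompleteProblemsBook} and used as a black box in the subsequent reduction showing that \twoclass{1,2} is NP-hard. So your proposal supplies a proof where the paper deliberately omits one; it is sound, but it goes beyond what the paper itself does.
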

We propose the following reduction: Given an input graph $G = (V_G, E_G)$, we determine whether there exists an independent set of size $k$ as follows:
\begin{enumerate}
    \item add a clique $C$ of size $|V_C| = n^2$ to the graph $G$ and connect each clique-vertex $u \in V_C$ with each graph vertex $v \in V_G$ to obtain the modified graph $G' = (V_{G'}, E_{G'})$, as shown in Figure \ref{fig:addclique}
    \item decide whether there is a solution for \twoclass{1,2} on $G'$ covering at least $c = k \cdot n^2$ edges,
    \item output the same answer for whether an independent set of size $k$ exists in $G$.
\end{enumerate}

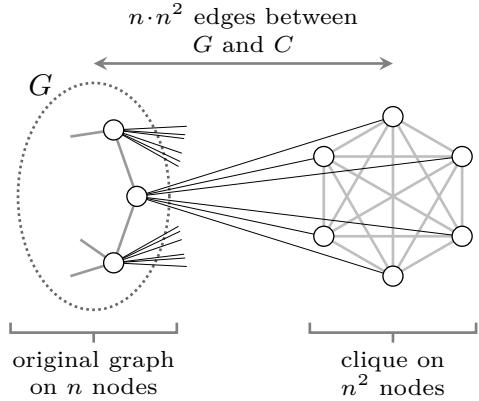
\begin{figure}
	\captionsetup{justification=centering}
	\centering
	\resizebox{!}{5.5cm}{%
	    \begin{tikzpicture}
	
	\draw[darkgray, thick, densely dotted] (30pt,50pt) ellipse (0.8cm and 1.2cm);
	
	\draw[midgray, thick] (43pt,50pt) -- (36pt,70pt);
	\draw[midgray, thick] (43pt,50pt) -- (36pt,30pt);
	\draw[midgray, thick] (36pt,30pt) -- (23pt,26pt);
	\draw[midgray, thick] (36pt,30pt) -- (26pt,37pt);
	\draw[midgray, thick] (36pt,70pt) -- (23pt,68pt);
	
	\draw[lightgray, thick] (120pt,74pt) -- (99.2pt,62pt);
	\draw[lightgray, thick] (120pt,74pt) -- (99.2pt,38pt);
	\draw[lightgray, thick] (120pt,74pt) -- (140.8pt,62pt);
	\draw[lightgray, thick] (120pt,74pt) -- (140.8pt,38pt);
	\draw[lightgray, thick] (120pt,74pt) -- (120pt,26pt);
	\draw[lightgray, thick] (99.2pt,62pt) -- (99.2pt,38pt);
	\draw[lightgray, thick] (99.2pt,62pt) -- (140.8pt,62pt);
	\draw[lightgray, thick] (99.2pt,62pt) -- (140.8pt,38pt);
	\draw[lightgray, thick] (99.2pt,62pt) -- (120pt,26pt);
	\draw[lightgray, thick] (99.2pt,38pt) -- (140.8pt,62pt);
	\draw[lightgray, thick] (99.2pt,38pt) -- (140.8pt,38pt);
	\draw[lightgray, thick] (99.2pt,38pt) -- (120pt,26pt);
	\draw[lightgray, thick] (140.8pt,62pt) -- (140.8pt,38pt);
	\draw[lightgray, thick] (140.8pt,62pt) -- (120pt,26pt);
	\draw[lightgray, thick] (140.8pt,38pt) -- (120pt,26pt);
	
	\draw[very thin] (43pt,50pt) -- (120pt,74pt);
	\draw[very thin] (43pt,50pt) -- (99.2pt,62pt);
	\draw[very thin] (43pt,50pt) -- (99.2pt,38pt);
	\draw[very thin] (43pt,50pt) -- (140.8pt,62pt);
	\draw[very thin] (43pt,50pt) -- (140.8pt,38pt);
	\draw[very thin] (43pt,50pt) -- (120pt,26pt);
	
	\draw[very thin] (36pt,70pt) -- (58pt,71.1pt);
	\draw[very thin] (36pt,70pt) -- (57.2pt,67.4pt);
	\draw[very thin] (36pt,70pt) -- (56pt,60.5pt);
	\draw[very thin] (36pt,70pt) -- (57.5pt,68.55pt);
	\draw[very thin] (36pt,70pt) -- (56.5pt,63pt);
	\draw[very thin] (36pt,70pt) -- (55.5pt,59pt);	
	
	\draw[very thin] (36pt,30pt) -- (58pt,28.9pt);
	\draw[very thin] (36pt,30pt) -- (57.2pt,32.6pt);
	\draw[very thin] (36pt,30pt) -- (56pt,39.5pt);
	\draw[very thin] (36pt,30pt) -- (57.5pt,31.45pt);
	\draw[very thin] (36pt,30pt) -- (56.5pt,37pt);
	\draw[very thin] (36pt,30pt) -- (55.5pt,41pt);	
	
	\draw[black, fill=white] (43pt,50pt) circle (0.7ex);
	\draw[black, fill=white] (36pt,70pt) circle (0.7ex);
	\draw[black, fill=white] (36pt,30pt) circle (0.7ex);
	
	\draw[black, fill=white] (120pt,74pt) circle (0.7ex);
	\draw[black, fill=white] (99.2pt,62pt) circle (0.7ex);
	\draw[black, fill=white] (99.2pt,38pt) circle (0.7ex);
	\draw[black, fill=white] (140.8pt,62pt) circle (0.7ex);
	\draw[black, fill=white] (140.8pt,38pt) circle (0.7ex);
	\draw[black, fill=white] (120pt,26pt) circle (0.7ex);
	
	
	\draw[gray, thick] (5pt,12pt) -- (5pt,9pt) -- (55pt,9pt) -- (55pt,12pt);
	\draw[gray, thick] (30pt,6pt) -- (30pt,9pt);
	\node[anchor=north] at (30pt,7pt) {\scriptsize original graph};
	\node[anchor=north] at (30pt,-1pt) {\scriptsize on $n$ nodes};

	\draw[gray, thick, arrows=stealth-stealth] (30pt,90pt) -- (120pt,90pt);
	\node[anchor=south] at (75pt,97pt) {\scriptsize $n\!\cdot\!n^2$ edges between};
	\node[anchor=south] at (75pt,90pt) {\scriptsize $G$ and $C$};
	
	\draw[gray, thick] (95pt,12pt) -- (95pt,9pt) -- (145pt,9pt) -- (145pt,12pt);
	\draw[gray, thick] (120pt,6pt) -- (120pt,9pt);
	\node[anchor=north] at (120pt,7pt) {\scriptsize clique on};
	\node[anchor=north] at (120pt,0pt) {\scriptsize $n^2$ nodes};
	
	\node[anchor=south] at (14pt,77pt) {\small $G$};

\end{tikzpicture}
    }
	\caption{Illustration of the reduction to Maximum Independent Set: Given an original graph $G$ on $n$ nodes, we add a clique $C$ of size $n^2$ and connect every node of $G$ to every node of $C$.}
	\label{fig:addclique}
\end{figure}

\begin{lemma}
    \label{lem:1-2-np-hard}
    It is NP-hard to decide \twoclass{1,2}.
\end{lemma}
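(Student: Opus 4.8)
The plan is to show that the reduction above is correct, i.e.\ that $G$ contains an independent set of size $k$ if and only if $G'$ admits a feasible \twoclass{1,2} solution covering at least $c = k \cdot n^2$ edges. Together with the (obviously polynomial) construction of $G'$ and the NP-completeness of Independent Set, this proves the claim. I will assume $k \geq 2$, which is without loss of generality since independent sets of size at most $1$ can be detected trivially. The first step is a combinatorial reformulation: a feasible \twoclass{1,2} solution is nothing but a choice of a set $B$ of vertices receiving the proper (say, black) color, where $B$ must be independent in $G'$ and all other vertices receive the relaxed (white) color. The conflict edges are then exactly the white--white edges, so the covered edges are exactly those with at least one endpoint in $B$. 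Hence \twoclass{1,2} on $G'$ is the problem of choosing an independent set $B$ of $G'$ that maximizes the number of edges incident to $B$.

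Next I would describe the independent sets of $G'$. Since $C$ is a clique, $B$ contains at most one vertex of $V_C$; and since every vertex of $V_C$ is adjacent to every vertex of $V_G$, as soon as $B$ contains a vertex of $V_C$ we must have $B \subseteq V_C$ and hence $|B| \le 1$. Therefore every independent set of $G'$ either (i) lies inside $V_G$, in which case it is precisely an independent set of $G$, or (ii) is a single vertex of $C$ or the empty set. For the forward direction, an independent set $S$ of size $k$ in $G$ yields $B = S$, which covers at least the $|S| \cdot n^2 = k \cdot n^2$ edges running between $S$ and $C$, so at least $c$ edges are covered. For the backward direction, assume a feasible solution covers at least $k \cdot n^2$ edges with black set $B$. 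In case (ii), $B$ is incident to at most $(n^2 - 1) + n$ edges, which is strictly less than $2 n^2 \le k \cdot n^2$; so case (ii) is impossible and $B$ must be an independent set of $G$. In case (i), $B$ is incident to at most $|B| \cdot n^2 + |E_G|$ edges, and since $|E_G| \le {n \choose 2} < n^2$ we get $k \cdot n^2 \le |B| \cdot n^2 + |E_G| < (|B| + 1) \cdot n^2$, hence $|B| \ge k$, and $B$ (or any $k$ of its vertices) is the desired independent set of $G$.

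I expect the only real subtlety --- and the reason for choosing the clique size to be $n^2$ --- to be the counting in the backward direction. The clique must be large enough that, first, the at most ${n \choose 2}$ internal edges of $G$ incident to $B$ cannot bridge the gap between $|B| \cdot n^2$ and $(|B|+1) \cdot n^2$, and second, a single clique vertex cannot on its own reach the threshold $c = k \cdot n^2$. Any polynomially bounded clique size exceeding $|E_G|$, with $c$ scaled to match, would work; $n^2$ is simply a clean uniform choice, and the remainder of the argument is bookkeeping.
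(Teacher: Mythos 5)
Your proof is correct and takes essentially the same route as the paper: the same reduction, the same forward direction, and the same counting bound $(k-1)\cdot n^2 + |E_G| < k\cdot n^2$ for the backward direction. The only (valid) difference is cosmetic: you dispatch a black clique vertex directly via $n^2+n-1 < 2n^2 \le k\cdot n^2$ under the harmless assumption $k \ge 2$, whereas the paper reasons about an optimal solution and treats the case of a size-$1$ maximum independent set separately.
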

\begin{proof}
    We will show that \twoclass{1,2} is NP-hard by arguing that the reduction given above correctly decides the Maximum Independent Set problem on any given input graph $G$ in polynomial time, if we assume that \twoclass{1,2} can be decided in polynomial time. To begin with, note that the proposed reduction requires only polynomial time.
    
    As for the correctness, we assume the two colors are \emph{black} (proper color, i.e.\ an independent set) and \emph{white} (relaxed color) and argue that an optimal solution to the \twoclass{1,2} problem covering at least $c = k \cdot n^2$ edges exists if and only if there exists an independent set of size $k$ in $G$ (that may be colored black to establish the required conflict bound $c$).
    
    If there exists an independent set of size $k$ in $G$, then we can simply color this independent set in black and cover at least $k \cdot n^2$ edges from each of the $k$ vertices in $V_G$ to each of the $n^2$ vertices in $C$. Hence, there exists a solution to \twoclass{1,2} covering at least $k \cdot n^2$ edges.
    
    For the inverse direction, we first argue that the added clique $C$ can be assumed to be colored entirely white in an optimal solution. Assume for contradiction that it is strictly optimal to color at least one node in $C$ black. If the clique $C$ contains one black node, no other vertex of the graph $G'$ can be colored black as black is a proper color and each clique vertex is fully connected in the graph $G'$. Hence, such a coloring may cover at most $n^2 + n - 1$ edges. However, this can only be optimal for a maximum independent set of size $1$ in $G$; otherwise, we could simply color an independent set of size two in black and cover at least $2n^2 > n^2 + n - 1$ edges. For a maximum independent set of size $1$, $G$ must be a clique itself and thus, any node $v \in V_G$ covers $n^2 + n - 1$ edges as well. Hence, without loss of generality, we may assume that there exists an optimal solution of \twoclass{1,2} on $G'$ where the added clique $C$ is colored entirely white.
    
    It remains to show the following: If there exists a solution to the \twoclass{1,2} problem covering at least $k \cdot n^2$ edges in $G'$,
    there must exist an independent set of size $k$ in $G$. As argued above, there must exist at least one optimal solution $\mathcal{S}$ in which $C$ is colored entirely white. We show that $\mathcal{S}$ assigns at least $k$ vertices in $G$ the color black; in other words, showing that there exists an independent set of size $k$.
    
    Assume (for contradiction) that any maximum independent set on $G$ had size at most $k - 1$. Furthermore, note that $|E_G| < n^2$ as $G$ is a simple graph on $n$ vertices. Hence, the solution $\mathcal{S}$ for \twoclass{1,2} could cover at most $(k - 1) \cdot n^2 + |E_G| < k \cdot n^2$ edges -- a contradiction.
\end{proof}
Ultimately, we combine \Cref{lem:k-larger-2}, \Cref{lem:1-2-np-hard} and the known NP-completeness results of the \maxCut{} problem \cite{NPCompleteProblemsBook} to obtain:
\begin{theorem}
    For $k \geq 2$ and $(r, k) \neq (0, 2)$, it is NP-complete to decide whether an input graph $G$ admits a solution to \twoclass{r,k} covering at least $c$ edges.
\end{theorem}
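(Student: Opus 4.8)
The plan is to assemble the theorem from the three building blocks already established: membership in NP, \Cref{lem:k-larger-2}, and \Cref{lem:1-2-np-hard}, together with the classical NP-completeness of \maxCut{}. First I would recall the observation made at the start of this section: any feasible solution covering at least $c$ edges is a polynomial-size certificate, verifiable in polynomial time by traversing the claimed set of covered edges and checking that the two color classes respect the proper/relaxed split. Hence \twoclass{r,k} lies in NP for every choice of $r$ and $k$, and it remains only to argue NP-hardness for all the claimed parameter values.

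For NP-hardness I would do a case distinction on $k$. If $k > 2$, then \Cref{lem:k-larger-2} already gives NP-hardness for every $r$ with $0 \leq r \leq k$, via the reduction that sets $c = |E|$ and reduces from \kColor{}. If $k = 2$, the hypothesis $(r,k) \neq (0,2)$ leaves only $r = 1$ and $r = 2$. The case $r = 1$ is precisely \Cref{lem:1-2-np-hard}. The case $r = 2$ is \twoclass{2,2}: with both colors relaxed, maximizing the number of covered edges is exactly maximizing the size of the cut between the two color classes, so this is literally \maxCut{}, whose NP-hardness (hence NP-completeness, given the NP membership above) is known \cite{NPCompleteProblemsBook}.

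Combining the two cases yields NP-hardness for every $(r,k)$ with $k \geq 2$ and $(r,k) \neq (0,2)$; together with membership in NP this gives NP-completeness, proving the theorem. I do not expect a genuine obstacle in this argument, since it is an assembly of results already in hand; the only points that need care are (i) verifying that the case split $k > 2$ versus $k = 2, r \in \{1,2\}$ is exhaustive over the stated parameter range (in particular that $(0,2)$ is the unique excluded pair), and (ii) making explicit that the identification \twoclass{2,2} $=$ \maxCut{} is a value-preserving equivalence, so that the hardness transfers with the same threshold $c$.
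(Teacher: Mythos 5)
Your proposal is correct and matches the paper's own argument: the paper likewise combines the NP-membership observation with \Cref{lem:k-larger-2} for $k>2$, \Cref{lem:1-2-np-hard} for $(r,k)=(1,2)$, and the known NP-completeness of \maxCut{} for $(r,k)=(2,2)$. No differences worth noting.
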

\begin{remark}
    The reductions presented in \Cref{sec:NP-completeness} demonstrate that the decision problems \twoclass{r,k} are NP-complete for $k \geq 2$ and $(r, k) \neq (0, 2)$. Similarly, it can be shown that the naturally corresponding optimization problems are NP-hard. Thus, subsequently, we study the approximability of these optimization problems.
\end{remark}

\section{Approximability Results}
\label{sec:approximation}

In this section, we discuss the approximability of the optimal solution in \twoclass{r,k} problems. Our results show that the problems \twoclass{r,k} essentially behave like \maxKCut{} for $r \geq 2$, while they behave like Maximum Independent Set for $r = 1$. An illustration of the resulting approximation complexity classes can be found in \Cref{fig:approximation-classes}.

\begin{figure}[tbh]
    \centering
    \includegraphics[width=0.85\linewidth]{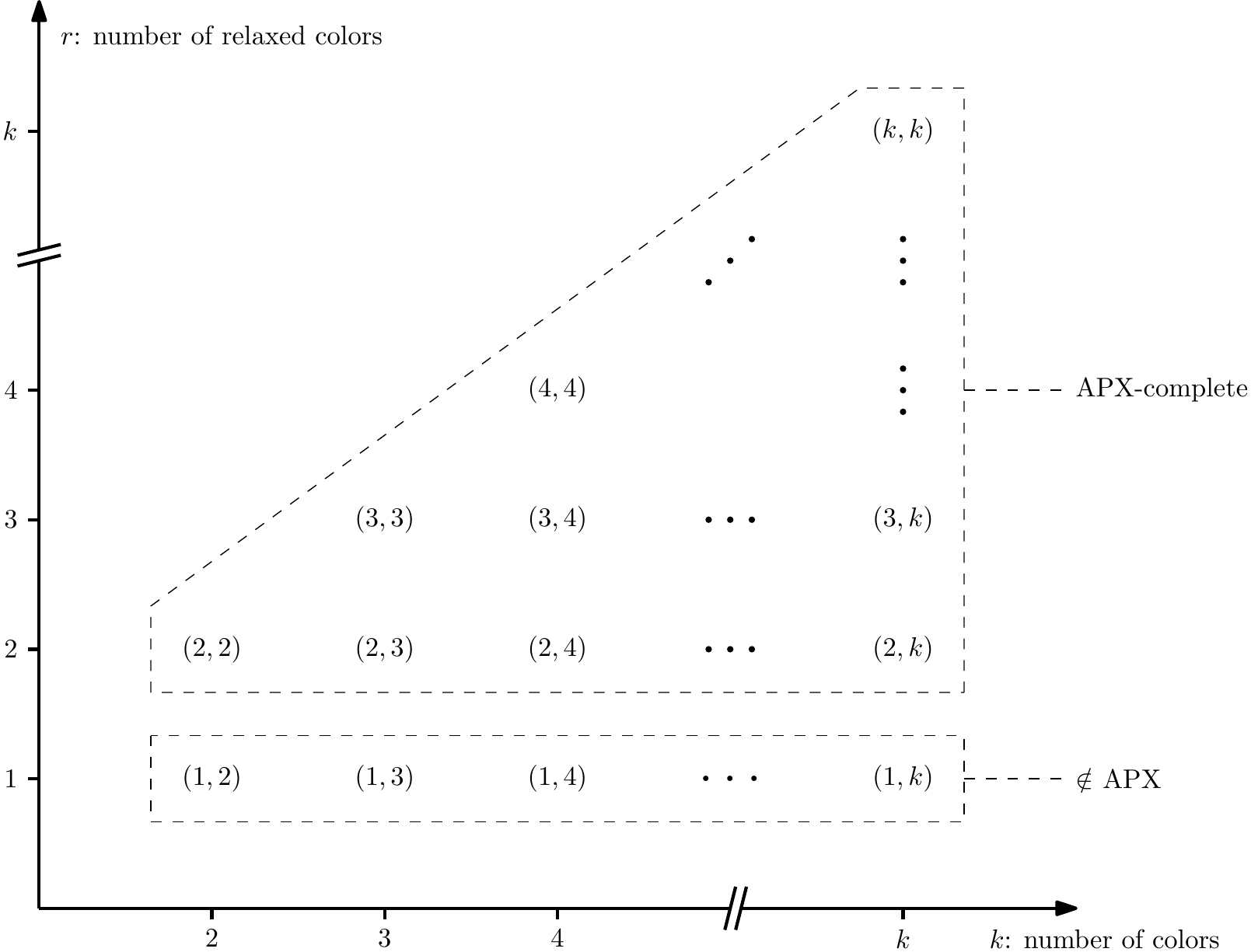}
    \caption{The \Twoclass{r,k} problems can be separated in two different approximation complexity classes. \Twoclass{r,k} is APX-complete for $r \geq 2$, while \Twoclass{1,k} cannot be approximated to any constant factor within polynomial time ($\notin$ APX).}
    \label{fig:approximation-classes}
\end{figure}

We assume some familiarity with the polynomial-time approximation complexity classes APX and PTAS. Intuitively, PTAS contains all problems that can be approximated to any constant factor, while APX contains all problems that can be approximated to some constant factor. Problems outside of APX can only be approximated to factors that depend on $n$.

We apply two kinds of reductions in this section. Both reductions reduce a (generally more well-known) problem $P_1$ to another problem $P_2$. Given a problem $P_1$ on an input graph $G$, our reduction first transforms $G$ into an instance of problem $P_2$ (in our case, another input graph $G'$). Then, assuming that an oracle provides us with a solution $\mathcal{S}$ of problem $P_2$ for the input graph $G'$, we show how to derive a solution $\mathcal{S}'$ for problem $P_1$ on graph $G$. Throughout this section, we leave it to the reader to verify that both transformations are computable in polynomial time.
\begin{definition}
    A reduction is called a \emph{continuous reduction} if there exists a constant $\alpha$ such that for any input graph $G$ and any possible solution $\mathcal{S}$ obtained on $G'$, we have
    \em 
        \[ \frac{\texttt{opt}_{P_1}(G)}{\texttt{val}_{P_1}(\mathcal{S}')} \quad \leq \quad \alpha \cdot \frac{\texttt{opt}_{P_2}(G')}{\texttt{val}_{P_2}(\mathcal{S})}. \]
    \em 
\end{definition}
Continuous reductions are known to preserve APX-membership; hence, if $P_1 \notin$ APX is known, then a continuous reduction from $P_1$ to $P_2$ proves that $P_2$ is not in APX either.

An \emph{L-reduction} (linear reduction), on the other hand, preserves membership in PTAS, and also preserves APX-completeness.
\begin{definition}
    We say that a reduction is an L-reduction if there are constants $\alpha$, $\beta$ such that the following two conditions hold:
    \begin{itemize}
        \item For any graph $G$, we have
            \em 
            \[ \texttt{opt}_{P_2}(G') \quad \leq \quad \alpha \cdot \texttt{opt}_{P_1}(G). \]
            \em 
        \item For any graph $G$ and possible solution $\mathcal{S}$ on $G'$, we have
            \em 
            \[ \texttt{opt}_{P_1}(G) - \texttt{val}_{P_1}(\mathcal{S}') \quad \leq \quad \beta \cdot \left( \texttt{opt}_{P_2}(G') - \texttt{val}_{P_2}(\mathcal{S}) \right). \]
            \em 
    \end{itemize}
\end{definition}
Note that these are simplified forms of the definitions, using the fact that we only study maximization problems in the paper. For a more detailed discussion of classes and tools in approximation complexity, we refer the reader to \cite{ApproxGuide}.

\subsection{Approximations for \twoclass{1,k}}
We show that the problems \twoclass{1,k} are intuitively more similar to the Maximum Independent Set (\maxIS{}) problem than to \maxCut{}. To that end, we first discuss \twoclass{1,2} separately and show that there exists a continuous reduction from \maxIS{} to the \twoclass{1,2} problem. Since it is known that \maxIS{} $\notin$ APX, this implies \twoclass{1,2} $\notin$ APX.

In fact, the reduction from \maxIS{} to \twoclass{1,2} follows in a rather straightforward way from a further analysis of the NP-completeness reduction presented in \Cref{sec:NP-completeness}. Recall that we transformed the input graph $G = (V_G, E_G)$ on $n$ nodes into a graph $G' = (V_{G'}, E_{G'})$ on $n^2 + n$ nodes by adding a clique $C$ of size $n^2$.
Given an algorithm to find an approximate solution $\mathcal{S}$ to problem \twoclass{1,2} in $G'$ in polynomial time, we derive an approximation for the \maxIS{} problem on the graph $G$ by selecting the nodes in $V_G$ that have been colored with the proper color. Once again, we assume the two colors are black (proper color) and white (relaxed color). The independent set of black nodes is denoted by $\mathcal{I}$.

We distinguish two cases. First, assume that the algorithm returns a coloring on $G'$ where $|\mathcal{I}|=1$. Recall that such a coloring can cover at most $n^2 + n - 1$ edges, as this is the highest possible degree in $G'$. For the \maxIS{} approximation, we then choose $\mathcal{I} = \{v\}$ some arbitrary $v \in V_G$.

Otherwise, at least two nodes are colored black. Recall that this implies that all black nodes are chosen from the original input graph $G$, as any node of the clique $C$ is connected to all other nodes in $G'$.
The independent set $\mathcal{I}$ covers at most $|\mathcal{I}| \cdot (n-|\mathcal{I}|)$ edges within $G$ (that is, of $E_G$), and exactly $|\mathcal{I}| \cdot n^2$ edges between $\mathcal{I}$ and $C$. Note that $|\mathcal{I}| \cdot (n-|\mathcal{I}|) \leq |E_G| < n^2$. In total, this amounts to at most $|\mathcal{I}|\cdot n^2 + n^2$ edges.

Since this bound equally holds for the case with $|\mathcal{I}| = 1$ above, we can establish
\[ \texttt{val}_{\twoclass{1,2}}(\mathcal{S}) \quad < \quad |\mathcal{I}|\cdot n^2 + n^2 \quad = \quad (|\mathcal{I}| + 1) \cdot n^2. \]

On the other hand, by choosing a maximum independent set in $G$ (and covering $n^2$ edges to the clique with each of its nodes), it follows that
\[ \texttt{opt}_{\twoclass{1,2}}(G') \quad \geq \quad \texttt{opt}_{\maxIS{}} (G) \cdot n^2. \]
Combining these two observations and interpreting the set of black nodes $\mathcal{I}$ as the approximate solution for the \maxIS{} problem on $G$, we get
\[ \frac{\texttt{opt}_{\twoclass{1,2}}(G')}{\texttt{val}_{\twoclass{1,2}}(\mathcal{S})} \geq \frac{ \texttt{opt}_{\maxIS{}} (G) \cdot n^2 }{(|\mathcal{I}| + 1) \cdot n^2} = \frac{\texttt{opt}_{\maxIS{}}(G) }{ |\mathcal{I}|+ 1} \overset{|\mathcal{I}|\ \geq\ 1}{\geq} \frac{ \texttt{opt}_{\maxIS{}}(G) }{ 2 \cdot |\mathcal{I}|} = \frac{1}{2} \cdot \frac{\texttt{opt}_{\maxIS{}}(G)}{\texttt{val}_{\maxIS{}}(\mathcal{I})}, \]
so the condition of continuous reduction is indeed satisfied for $\alpha=2$. Hence, the existence of a $\delta$-approximation for the \twoclass{1,2} problem would allow us to derive a $\delta / 2$-approximation for the \maxIS{} problem. Knowing that \maxIS{} $\notin$ APX, we conclude that \twoclass{1,2} $\notin$ APX.

\paragraph*{Generalization to \twoclass{1,k}}
We now present a continuous reduction from \twoclass{1,2} to \twoclass{1,k}. Together with our previous result, this shows that \twoclass{1,k} can also not be approximated to any constant factor.
\begin{lemma}
    There exists a continuous reduction from \twoclass{1,2} to  \twoclass{1,k} for any constant $k \geq 2$.
\end{lemma}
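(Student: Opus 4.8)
The plan is to reduce $\twoclass{1,2}$ to $\twoclass{1,k}$ by taking an input graph $G$ for $\twoclass{1,2}$ and padding it with a large clique whose size forces the $k-2$ extra proper colors to be ``used up'' on the clique, leaving the structure of $G$ to be colored with essentially one proper and one relaxed color. Concretely, given $G = (V_G, E_G)$ on $n$ nodes, I would form $G'$ by adding $k-2$ disjoint cliques $C_1, \dots, C_{k-2}$, each of size roughly $N = n^3$ (any polynomial that dominates $|E_{G'}|$ on the subgraph we care about suffices), making each $C_i$ completely joined to every vertex of $V_G$ and to every vertex of every other $C_j$. The point is that in any solution of $\twoclass{1,k}$ on $G'$, the cheapest way to cover the huge number of edges incident to the $C_i$'s is to give each $C_i$ its own proper color; any solution that does otherwise loses $\Omega(N^2)$ covered edges, which is far more than the at most $O(N \cdot n) + O(n^2)$ edges one could ever hope to recover elsewhere. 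This ``canonical form'' argument is the heart of the reduction, exactly analogous to the single-clique argument already used in \Cref{lem:1-2-np-hard} and in the preceding $\maxIS$ reduction.

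Next I would translate between solutions. Given an (approximate) solution $\mathcal{S}$ of $\twoclass{1,k}$ on $G'$, I first argue it can be assumed to be in canonical form without decreasing its value by more than a factor absorbed into the constant $\alpha$ — or, more cleanly for a continuous reduction, I show directly that its value is at most $(k-2)\binom{N}{2} + (k-2)N\cdot n + \texttt{val}_{\twoclass{1,2}}(\mathcal{S}_G) $ where $\mathcal{S}_G$ is the restriction of $\mathcal{S}$ to $V_G$ using whichever single proper color appears there plus the relaxed color, and symmetrically $\texttt{opt}_{\twoclass{1,k}}(G') \geq (k-2)\binom{N}{2} + (k-2)N\cdot n + \texttt{opt}_{\twoclass{1,2}}(G)$. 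The edges incident to the cliques contribute a fixed additive term that is identical on both sides, so after subtracting it the ratio of optima to values on $G'$ is controlled by the ratio on $G$; the residual slack (vertices of $V_G$ colored with a ``wrong'' proper color, i.e. one of the clique colors) contributes at most $O(n^2) = o(N^2)$ and can be charged against the dominant clique term to yield the constant $\alpha$. I would set the numbers so this works out to something like $\alpha = 2$, mirroring the earlier computation.

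The step I expect to be the main obstacle is making the additive-term bookkeeping rigorous in the \emph{continuous reduction} inequality $\frac{\texttt{opt}_{P_1}(G)}{\texttt{val}_{P_1}(\mathcal{S}')} \leq \alpha \cdot \frac{\texttt{opt}_{P_2}(G')}{\texttt{val}_{P_2}(\mathcal{S})}$: the presence of a large common additive term $T$ on both numerator and denominator of the right-hand side actually \emph{helps} (it pushes the ratio toward $1$), but I must be careful that it appears on \emph{both} $\texttt{opt}$ and $\texttt{val}$ for $G'$, which requires the canonical-form lower bound on $\texttt{val}_{\twoclass{1,2}}(\mathcal{S}_G)$ in terms of $\texttt{val}_{\twoclass{1,k}}(\mathcal{S})$ — i.e. that a good solution on $G'$ genuinely restricts to a good solution on $G$. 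The delicate case is when $\mathcal{S}$ assigns very few (one or zero) vertices of $V_G$ the designated proper color, exactly as in the $|\mathcal{I}| = 1$ case handled earlier; there I would again fall back on the observation that a single proper-colored vertex covers at most $\deg_{G'}(v) \leq (k-2)N + n - 1$ edges beyond the clique-internal edges, and that this is dominated, so the trivial solution $\mathcal{I} = \{v\}$ on $G$ still satisfies the required inequality. Composing this continuous reduction with the already-established $\maxIS \to \twoclass{1,2}$ continuous reduction (and using that continuous reductions compose and preserve non-membership in APX) then gives $\twoclass{1,k} \notin$ APX for all constant $k \geq 2$.
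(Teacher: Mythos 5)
Your construction has a fatal feasibility problem at its very core. In \twoclass{1,k} every proper color class must be an independent set, so a clique $C_i$ of size $N$ can never "be given its own proper color": it can contain at most one vertex of each proper color, and since you also join the cliques to one another (and to $V_G$), the union $C_1\cup\dots\cup C_{k-2}$ is a single clique of size $(k-2)N$ that can host at most $k-1$ proper-colored vertices in total. All remaining clique vertices are forced onto the unique relaxed color, so in \emph{every} feasible solution all but $O(kN)$ of the $\Theta(k^2N^2)$ intra- and inter-clique edges are conflicts. Consequently your key bound $\texttt{opt}_{\twoclass{1,k}}(G')\geq (k-2){N \choose 2}+(k-2)Nn+\texttt{opt}_{\twoclass{1,2}}(G)$ is false, and the "large common additive term that cancels on both sides" does not exist. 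Worse, the gadget does not even achieve its intended effect of absorbing the $k-2$ extra proper colors: a proper color spent on a clique vertex covers only about $(k-2)N$ edges, while a proper color spent on an independent set of $V_G$ covers about $(k-2)N$ edges \emph{per vertex}, so an optimal solution on $G'$ places all $k-1$ proper colors inside $V_G$ and essentially maximizes the number of proper-colored vertices of $V_G$ (a maximum induced $(k-1)$-colorable-subgraph objective, in the spirit of \maxIS{}), which is not the edge-coverage objective of \twoclass{1,2} on $G$. A near-optimal $\mathcal{S}$ on $G'$ may thus split its proper vertices into $k-1$ classes none of which has many outgoing edges in $G$, and restricting to "whichever single proper color appears" carries no constant-factor guarantee against $\texttt{opt}_{\twoclass{1,2}}(G)$. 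The clique trick of \Cref{lem:1-2-np-hard} works there precisely because the clique is meant to be colored with the \emph{relaxed} color and to amplify the value of each proper-colored vertex of $G$; it cannot be repurposed to soak up proper colors.

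The statement is in fact much easier than your plan suggests, and needs no gadget at all: take $G'=G$. Given a \twoclass{1,k} solution $\mathcal{S}$ with proper classes $I_1,\dots,I_{k-1}$ and relaxed class $V_k$, keep as the single proper class the $I_j$ maximizing the number of outgoing edges $\mathrm{out}(I_j)$, and merge all other classes into the relaxed color. Every covered edge of $\mathcal{S}$ has an endpoint in some $I_j$, so $\texttt{val}_{\twoclass{1,k}}(\mathcal{S})\leq\sum_j \mathrm{out}(I_j)\leq (k-1)\cdot\mathrm{out}(I_1)=(k-1)\cdot\texttt{val}_{\twoclass{1,2}}(\mathcal{S}')$, and trivially $\texttt{opt}_{\twoclass{1,2}}(G)\leq\texttt{opt}_{\twoclass{1,k}}(G)$ since every feasible \twoclass{1,2} coloring is a feasible \twoclass{1,k} coloring. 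This yields the continuous-reduction inequality with $\alpha=k-1$, which is a constant because $k$ is constant. If you want to keep a padding-style argument you would have to redesign the gadget so that the quantity it rewards is the \twoclass{1,2} value on $G$ itself, but given the one-line merging argument there is no reason to.
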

\begin{proof}
    Given a solution $\mathcal{S}$ of the \twoclass{1,k} problem on graph $G$, we show how to transform $\mathcal{S}$ into a solution $\mathcal{S'}$ for the \twoclass{1,2} problem on the same graph $G$, while only losing a constant approximation factor. Note that $k$ is assumed to be a constant.
    
    Generally, $\mathcal{S}$ can be regarded as a partitioning of the nodes into independent sets $I_1, ..., I_{k-1}$, and a relaxed color set $V_k$. Without loss of generality, we can assume that the sets $I_1, ..., I_{k-1}$ are sorted in descending order by the number of their outgoing edges $\text{out}(I_j)$, i.e., the number of edges that have exactly one endpoint in $I_j$.
    
    In order to turn this into a solution $\mathcal{S'}$ for the \twoclass{1,2} problem on the same graph $G$, let us select the independent set $I_1$ as the proper color class of $\mathcal{S'}$, and let the relaxed color class consist of all the remaining nodes, i.e.\ $V_0 := \left( \bigcup_{j=2}^{k-1} I_j \right) \cup V_k$. As $I_1$ is an independent set, the partitioning ($I_1$, $V_0$) is indeed a feasible solution of \twoclass{1,2}. Note that the value of this solution is by definition $\texttt{val}_{\twoclass{1,2}}(\mathcal{S'}) = \text{out}(I_1)$.
    
    In the original solution $\mathcal{S}$, since we only have one relaxed color, all covered edges have at least one of their endpoints in the sets $I_1, ..., I_{k-1}$, and thus
    \[ \texttt{val}_{\twoclass{1,k}}(\mathcal{S}) \quad \leq \quad \sum_{j=1}^{k-1} \text{out}(I_j) \quad \leq \quad (k-1) \cdot \text{out}(I_1), \]
    because we assumed that $I_1$ is the independent set with the maximum number of outgoing edges $\text{out}(I_1)$. This shows that
    \[ \frac{1}{k-1} \cdot \texttt{val}_{\twoclass{1,k}}(\mathcal{S}) \quad \leq \quad \texttt{val}_{\twoclass{1,2}}(\mathcal{S'}). \]
    Since the set of feasible colorings in \twoclass{1,2} is a subset of those in \twoclass{1,k}
    \[ \texttt{opt}_{\twoclass{1,2}}(G) \quad \leq \quad \texttt{opt}_{\twoclass{1,k}}(G) \]    follows straightforwardly. Thus we have
    \[ \frac{\texttt{opt}_{\twoclass{1,2}}(G)}{\texttt{val}_{\twoclass{1,2}}(\mathcal{S'})} \quad \leq \quad \frac{\texttt{opt}_{\twoclass{1,k}}(G)}{\frac{1}{k-1} \cdot \texttt{val}_{\twoclass{1,k}}(\mathcal{S})} \quad = \quad (k-1) \cdot \frac{\texttt{opt}_{\twoclass{1,k}}(G)}{\texttt{val}_{\twoclass{1,k}}(\mathcal{S})}, \]
    showing that this is indeed a continuous reduction for $\alpha=(k-1)$.
\end{proof}
Since we have already seen that \twoclass{1,2} $\notin$ APX, this implies the following theorem.
\begin{theorem}
    For any constant $k \geq 2$, the problem \twoclass{1,k} $\notin$ APX.
\end{theorem}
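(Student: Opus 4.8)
The proof is essentially immediate by chaining what has already been established, so the plan is short. First I would collect the two ingredients. The discussion preceding the ``Generalization'' paragraph exhibits a continuous reduction from $\maxIS{}$ to $\twoclass{1,2}$ with constant $\alpha = 2$; combined with the known fact that $\maxIS{} \notin$ APX and with the stated fact that continuous reductions preserve APX-membership, this already gives $\twoclass{1,2} \notin$ APX. The lemma immediately above provides, for each fixed $k \geq 2$, a continuous reduction from $\twoclass{1,2}$ to $\twoclass{1,k}$ with constant $\alpha = k-1$.

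The remaining step is to apply the preservation property once more: since $\twoclass{1,2} \notin$ APX and there is a continuous reduction $\twoclass{1,2} \to \twoclass{1,k}$, it follows that $\twoclass{1,k} \notin$ APX. Equivalently, one may observe that continuous reductions compose — concatenating the two transformations and applying the two back-maps in turn multiplies the constants — so the two reductions above yield a single continuous reduction $\maxIS{} \to \twoclass{1,k}$ with constant $2(k-1)$: given an approximate $\twoclass{1,k}$ solution one first loses a factor $k-1$ when passing to $\twoclass{1,2}$, then a further factor $2$ when passing to $\maxIS{}$. Hence a constant-factor approximation for $\twoclass{1,k}$ would produce a constant-factor approximation for $\maxIS{}$, contradicting $\maxIS{} \notin$ APX.

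There is no genuine obstacle in this argument; the only point that must be kept in view is that $k$ is a fixed constant of the problem family rather than part of the input, so that the reduction constant $k-1$ (and hence $2(k-1)$) is genuinely a constant. This is precisely the hypothesis under which the preceding lemma is stated, and it is exactly what makes the notion of a continuous reduction — and therefore the conclusion $\twoclass{1,k} \notin$ APX — applicable.
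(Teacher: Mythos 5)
Your proposal is correct and follows exactly the paper's route: the theorem is obtained by combining the earlier result $\twoclass{1,2} \notin$ APX (via the continuous reduction from \maxIS{} with $\alpha = 2$) with the lemma giving a continuous reduction from $\twoclass{1,2}$ to $\twoclass{1,k}$ with $\alpha = k-1$, using that continuous reductions preserve APX-membership. Your added remarks on composing the constants and on $k$ being fixed are consistent with, and slightly more explicit than, the paper's one-line derivation.
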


\subsection{Approximations for \twoclass{r,k} with $r \geq 2$}
For this case, recall that \twoclass{k,k} is essentially a reformulation of the \maxKCut{} problem. Since \maxKCut{} is known to be APX-complete \cite{APXcomplete}, the same naturally holds for \twoclass{k,k}. Let us consider some other $r \in \{ 2, ..., k-1\}$.

We begin by showing that \twoclass{r,k} is also contained in APX.
It is known that using the probabilistic method and derandomization, one can obtain a simple deterministic algorithm that finds a partitioning ($V_1$, $V_2$) which cuts at least $\frac{|E|}{2}$ edges \cite{ProbMethod}. This already ensures that the algorithm in question returns a $\frac{1}{2}$-approximation for the \maxCut{} problem.

Furthermore, this technique also allows us to approximate the optimum solution of each \twoclass{r,k} (with $r \geq 2$) to a factor of $\frac{1}{2}$. Given a problem \twoclass{r,k} on $G$, we can simply run this method to obtain a partitioning ($V_1$, $V_2$) with at least $\frac{|E|}{2}$ edges cut, and assign the first two relaxed colors to these partitions (not using the remaining $k-2$ colors at all). As we have $\texttt{opt}_{\twoclass{r,k}}(G) \leq |E|$ in any case, this is already a $\frac{1}{2}$-approximation algorithm for the problem, showing that \twoclass{r,k} can always be approximated to a constant factor.
\begin{lemma} \label{lem:inAPX}
For $k \geq r \geq 2$, we have \twoclass{r,k} $\in$ APX.
\end{lemma}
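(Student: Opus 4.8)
The plan is to give a direct polynomial-time $\frac{1}{2}$-approximation algorithm, exploiting the fact that two relaxed colors already suffice to encode an arbitrary cut. First I would note that, since $r \geq 2$, any partition of $V$ into two parts $(V_1, V_2)$, assigned to two of the relaxed colors and leaving the remaining $k-2$ colors empty, is a feasible solution of \twoclass{r,k}: relaxed colors may contain conflicts, and the unused proper colors are (vacuously) independent sets. For such a solution the covered edges are precisely the edges of the cut $(V_1, V_2)$, so $\texttt{val}_{\twoclass{r,k}}$ of this coloring equals the number of edges crossing $(V_1, V_2)$.

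Second, I would invoke the classical fact that a partition cutting at least $\frac{|E|}{2}$ edges is computable in polynomial time. Under a uniformly random assignment of each vertex to $V_1$ or $V_2$, every edge is cut with probability $\frac{1}{2}$, so the expected number of cut edges is $\frac{|E|}{2}$; derandomization via the method of conditional expectations then yields a deterministic polynomial-time algorithm achieving at least this bound (alternatively, a simple local-search argument suffices). Let $\mathcal{S}$ be the \twoclass{r,k} solution obtained from this partition as in the first step, so that $\texttt{val}_{\twoclass{r,k}}(\mathcal{S}) \geq \frac{|E|}{2}$.

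Finally, I would use the trivial bound $\texttt{opt}_{\twoclass{r,k}}(G) \leq |E|$, which holds because no coloring can cover more edges than $G$ has. Combining the two inequalities gives $\texttt{val}_{\twoclass{r,k}}(\mathcal{S}) \geq \frac{1}{2} \cdot \texttt{opt}_{\twoclass{r,k}}(G)$, so the algorithm is a $\frac{1}{2}$-approximation, and therefore \twoclass{r,k} $\in$ APX for all $k \geq r \geq 2$.

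I do not expect any serious obstacle here: the argument needs only the crude estimate $\texttt{opt} \leq |E|$ rather than any structural understanding of the optimum, so no finer analysis is required. The one point worth checking carefully is the feasibility claim of the first step, namely that discarding $k-2$ of the available colors still leaves a legitimate \twoclass{r,k} coloring. (The harder, complementary direction — APX-hardness, needed for full APX-completeness — is a separate matter, established via an L-reduction from \maxKCut{}.)
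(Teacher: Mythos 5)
Your proposal is correct and matches the paper's argument essentially verbatim: both take a derandomized (or locally improved) partition cutting at least $\frac{|E|}{2}$ edges, assign its two sides to two relaxed colors while leaving the remaining $k-2$ colors unused, and compare against the trivial bound $\texttt{opt}_{\twoclass{r,k}}(G) \leq |E|$ to get a $\frac{1}{2}$-approximation. The feasibility point you flag (empty proper color classes are vacuously independent) is indeed the only thing to check, and it holds.
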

In order to show that \twoclass{r,k} is also complete in APX, we present an L-reduction to transform a solution of \twoclass{r,k} into a solution of \twoclass{2,2}, i.e.\ \maxCut{}. The APX-completeness of \maxKCut{} has also been shown through a reduction to \maxCut{} in \cite{APXcomplete}. We essentially show that the same graph transformation can be applied in our case as in the reduction of \cite{APXcomplete}, regardless of the difference between \maxKCut{} and \twoclass{r,k}, i.e.\ the extra restriction that some of the classes must be proper colors. For this, we first describe the reduction of \cite{APXcomplete}.
\begin{lemma}
    There exists an L-reduction from \twoclass{2,2} to \twoclass{k,k} (based on the transformation idea described in \cite{APXcomplete}).
\end{lemma}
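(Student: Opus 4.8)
Because \twoclass{2,2} coincides with \maxCut{} and \twoclass{k,k} coincides with \maxKCut{} — no color is proper, so ``covering an edge'' and ``cutting an edge'' mean the same thing — this lemma is, up to terminology, the known fact that \maxCut{} L-reduces to \maxKCut{}. So the plan is: (i) recall the transformation $G \mapsto G'$ of \cite{APXcomplete}, which augments the input graph by a small gadget on $O(k)$ new vertices (a clique on $k-2$ vertices joined to $V(G)$ being the natural choice), and which, if needed, may be fed bounded-degree \maxCut{} instances since those remain APX-hard; (ii) reinterpret it in the coloring language, observing that the feasible solutions of \twoclass{k,k} on $G'$ are exactly the $k$-cuts of $G'$, so neither the construction nor its analysis is affected by the proper/relaxed terminology; and (iii) verify the two L-reduction inequalities.

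For (iii) let $N$ be the number of edges of $G'$ that touch the added gadget. The first inequality $\texttt{opt}_{\twoclass{k,k}}(G') \leq \alpha\cdot\texttt{opt}_{\twoclass{2,2}}(G)$ holds because $N$ is linear in the size of $G$, while $\texttt{opt}_{\twoclass{2,2}}(G)=\Theta(|E(G)|)$ is as well (discarding isolated vertices, $\texttt{opt}_{\twoclass{2,2}}(G)\geq |E(G)|/2$), so $\alpha$ can be taken to be a constant depending only on $k$ and the degree bound. For the second inequality I would first prove the structural fact that every feasible $\mathcal{S}$ of \twoclass{k,k} on $G'$ can be turned, in polynomial time and without decreasing its value, into a ``canonical'' solution $\hat{\mathcal{S}}$ in which the $k-2$ gadget vertices receive $k-2$ distinct colors and $V(G)$ uses only the two remaining colors; in particular this yields $\texttt{opt}_{\twoclass{k,k}}(G') = N + \texttt{opt}_{\twoclass{2,2}}(G)$. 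Granting this, I take the reconstructed solution $\mathcal{S}'$ to be $\hat{\mathcal{S}}$ restricted to $V(G)$, a feasible 2-coloring of $G$, so $\texttt{val}_{\twoclass{2,2}}(\mathcal{S}') = \texttt{val}_{\twoclass{k,k}}(\hat{\mathcal{S}}) - N \geq \texttt{val}_{\twoclass{k,k}}(\mathcal{S}) - N$; subtracting this from $\texttt{opt}_{\twoclass{k,k}}(G') = N + \texttt{opt}_{\twoclass{2,2}}(G)$ gives $\texttt{opt}_{\twoclass{2,2}}(G) - \texttt{val}_{\twoclass{2,2}}(\mathcal{S}') \leq \texttt{opt}_{\twoclass{k,k}}(G') - \texttt{val}_{\twoclass{k,k}}(\mathcal{S})$, i.e.\ an L-reduction with $\beta = 1$.

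The main obstacle is the structural/normalization claim, namely that pushing $V(G)$-vertices off the ``gadget colors'' (and repairing the gadget's own coloring) is never harmful on balance. With the plain clique gadget this is precisely where one needs the bounded-degree hypothesis — or, alternatively, a small edge multiplicity, which is the one place it is convenient to pass through a multigraph: a vertex of $V(G)$ can gain only a bounded number of covered edges by adopting a gadget color, while it loses at least as many covered edges towards the gadget, so the recoloring move can be arranged to be non-decreasing. Pinning down this trade-off quantitatively, and handling the sub-cases where the gadget is not yet properly colored or where $V(G)$ already uses three or more colors, is the delicate part; everything else — feasibility, the polynomial running time, and the two displayed estimates — is routine. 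Finally, I would stress that the whole argument is blind to which of the $k$ colors are labelled proper (in \twoclass{k,k} none of them are), which is exactly what will let the identical transformation be reused for general $r$ in the next lemma.
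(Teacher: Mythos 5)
Your overall plan matches the paper's (augment $G$ with a clique gadget, normalize any $k$-coloring so that the gadget uses $k-2$ distinct colors and $V(G)$ only the two remaining ones, restrict to $V(G)$, get $\beta=1$), but the one step you yourself flag as ``the delicate part'' is exactly where the content of the lemma lies, and with your ``natural choice'' of a \emph{plain} clique on $k-2$ vertices joined to $V(G)$ it is not just unproven -- it is false in general. With multiplicity-one edges, a vertex $v \in V(G)$ that adopts a gadget color loses only \emph{one} covered edge towards the gadget (the single edge to the like-colored gadget vertex) while it can gain up to $d(v)$ covered edges inside $G$; so your claim that it ``loses at least as many covered edges towards the gadget'' has the inequality the wrong way round, and bounding the degree does not repair it (already $d(v)=2$ breaks it). Concretely, take $k=3$ and $G$ the complete tripartite graph $K_{t,t,t}$: the canonical value is $N+\texttt{opt}_{\twoclass{2,2}}(G)\approx 3t+2t^2$, whereas $3$-coloring $G$ by its tripartition and giving the gadget vertex one of those colors covers $3t^2+2t$ edges, so your asserted identity $\texttt{opt}_{\twoclass{k,k}}(G')=N+\texttt{opt}_{\twoclass{2,2}}(G)$ fails, and with it the $\beta=1$ bound. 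This is precisely why the paper's construction puts multiplicity $2d(v)$ on every edge from $v\in V(G)$ to the gadget and multiplicity $2m$ on every gadget--gadget edge: then recoloring $v$ off a gadget color gains at least $2d(v)$ while losing at most $d(v)$, and a coloring that repeats a color inside the gadget already forfeits $2m\ge 2\cdot(\text{\#uncovered edges of a canonical coloring})$, so the second L-reduction condition holds trivially in that branch.

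If instead you take your alternative route through edge multiplicities (which is the paper's route), you are missing the second half of the argument: \twoclass{k,k} is defined on simple graphs, so a reduction whose target is a multigraph does not prove the lemma. The paper converts $G_M'$ to a simple graph by replacing each parallel edge with a copy of an auxiliary graph $K$ on $k+2$ nodes (a clique minus the path $(w_1,w_2),(w_2,w_3),(w_3,w_4)$), which is properly $k$-colorable, forces $w_2\neq w_3$ in any conflict-free coloring, and can always be normalized to carry exactly $0$ or $1$ conflicts; this keeps $|E(G')|=\mathcal{O}(k^4 m)$ so the first condition survives. Your first inequality is essentially fine (it needs $\texttt{opt}_{\twoclass{2,2}}(G)\ge |E(G)|/2$ and $|E(G')|=\mathcal{O}(k^2\,|E(G)|)$, no degree bound required), and switching the source to bounded-degree \maxCut{} would still suffice for the intended APX-hardness even though it weakens the lemma as stated; but as written the proposal lacks both the quantitative gadget weighting that makes the normalization non-decreasing and the multigraph-to-simple-graph conversion, so the proof is incomplete at its crux.
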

\begin{proof}
    We describe the reduction in two steps. We first present a reduction to the \twoclass{r,k} problem on multigraphs (i.e.\ graphs allowing parallel edges). We then discuss how to transform such a multigraph into a simple graph without parallel edges.
    
    Given a \maxCut{} problem on an input graph $G$ with $n$ nodes and $m$ edges, let us create a multigraph $G_M'$ as follows. We add a clique $C$ of $k-2$ further nodes to $G$, with each edge between the nodes of $C$ having a multiplicity of $2m$. Furthermore, we connect each node $v \in G$ to each node $u \in C$, with an edge of multiplicity $2d(v)$. The resulting multigraph consist of $n+k-2$ nodes and $|E_{G_M'}| = m + (k - 2) \cdot 4m + {k-2 \choose 2} \cdot 2m$ edges (with multiplicity).
    
    Note that this already satisfies the first condition of the L-reduction. As we discussed earlier, an optimum solution of \maxCut{} on $G$ will certainly cover at least $\frac{m}{2}$ edges. On the other hand, $G_M'$ has only $|E_{G_M'}| \in \mathcal{O}(k^2m)$ edges. Thus, we have
   \[ \frac{1}{2} \cdot \texttt{opt}_{\twoclass{r,k}}(G_M') \quad \leq \quad \mathcal{O}(k^2) \cdot \texttt{opt}_{\maxCut{}}(G)\,, \]
    which implies the first condition of the L-reduction for some $\alpha = 2 \cdot \mathcal{O}(k^2)$.

    We now show that the second condition also holds for $\beta=1$. Note that if we color the nodes of $G$ with two relaxed colors in an arbitrary way, and use each of the remaining $k-2$ (proper or relaxed) colors to color one node of $C$, then we immediately obtain a coloring with at
    most $m$ uncovered edges.
    
    Consider a solution $\mathcal{S}$ to the \twoclass{r,k} problem, that is, a $k$-coloring of the nodes of $G_M'$. If such a coloring was to assign the same color to any two nodes of $C$, then it would lose at least $2m$ edges, and thus the right side of the second condition would be at least $m$, implying that the condition holds regardless of our choice of $\mathcal{S}'$. Thus, it remains to show the condition for the case when $\mathcal{S}$ colors the nodes of $C$ with $k-2$ different colors.
    
    Note that if any $v \in G$ is assigned one of the $k - 2$ colors used in $C$, it may be recolored to one of the remaining 2 colors (not used in $C$). Thereby, the value of a solution $\mathcal{S}$ will increase by at least $2 d(v)$, while introducing at most $d(v)$ new conflicts. Hence, after recoloring all such vertices in $G$, we can assume to obtain a solution $\mathcal{S}$ where $C$ is colored with $k-2$ different colors, and only the remaining 2 colors are used in $G$.
    
    
    Given a $k$-coloring $\mathcal{S}$ of $G_M'$, we can then simply define the solution $\mathcal{S}'$ of \maxCut{} as the restriction of $\mathcal{S}$ to the nodes of $G$. Since both $\mathcal{S}$ and the optimal coloring covers all edges of $G_M'$ outside of $G$, and they both use only 2 colors in $G$, their difference is in fact determined by how many edges they cover in $G$. Hence, we have
    \[ \texttt{opt}_{\twoclass{r,k}}(G_M') - \texttt{val}_{\twoclass{r,k}}(\mathcal{S}) \quad = \quad \texttt{opt}_{\maxCut{}}(G) - \texttt{val}_{\maxCut{}}(\mathcal{S}'), \]
    and thus the second condition indeed holds with $\beta=1$.
    
    As a second step, we transform the multigraph $G_M'$ into a simple graph $G'$ with the same properties. For this, we define an auxiliary graph $K$ on $k+2$ nodes. Let us denote the nodes of $K$ by $w_1$, $w_2$, ... $w_{k+2}$, and let $K$ be the graph obtained by taking a clique on these $k+2$ nodes, and then deleting the edges ($w_1,w_2$), ($w_2,w_3$) and ($w_3,w_4$) from this clique. There are two crucial properties of the resulting graph $K$: (i) it is $k$-colorable without any conflicts, (ii) every valid $k$-coloring of $K$ assigns different colors to nodes $w_2$ and $w_3$.
    
    We then use these copies of the graph $K$ to replace the parallel edges of $G_M'$. For each edge $(u,v)$ in $G_M'$ which is not contained in $G$, we can replace $(u,v)$ by a separate instance of $K$, with the nodes $u$ and $v$ taking the roles of nodes $w_2$ and $w_3$. Since we only insert $k$ new nodes and $\mathcal{O}(k^2)$ new edges for each original edge $(u,v)$ of $G_M'$, the number of nodes and edges in the resulting graph $G'$ will still remain polynomial in $n$. In the graph obtained, each $K$ will behave like an edge in the following sense: If nodes $u$ and $v$ have different color, then it is possible to color $K$ without a conflict; but if $u$ and $v$ have the same color, then there will be at least one conflict in $K$. Thus, if there were originally $t$ parallel edges between $u$ and $v$, then assigning the same color to $u$ and $v$ implies that there will certainly be at least $t$ conflicts within the $t$ copies of $K$ between the two nodes.
    
    For the purpose of the reduction, the obtained graph $G'$ behaves exactly like the multigraph~$G_M'$. Since the number of edges in $G'$ is in $\mathcal{O}(k^4m)$, the first condition still holds. In any solution $\mathcal{S}$, if a copy of $K$ has more than one conflict, we can always recolor the nodes $w_1$, $w_4$, $w_5$, ..., $w_{k+1}$ such that there are only 0 or 1 conflicts in the end, depending on whether $w_2$ and $w_3$ have the same color or not. Hence, we can assume that each $K$ represents exactly as many conflicts as the parallel edge it has replaced. Thus, the reduction described for the multigraph case also works for $G'$ without modification.
\end{proof}
The transformation described so far is a valid reduction to \twoclass{k,k}, i.e.\ when all the available colors are relaxed. We now show that it also remains a valid reduction for \twoclass{r,k}.
\begin{lemma}
    There exists an L-reduction from \twoclass{2,2} to \twoclass{r,k} for any $r \geq 2$.
\end{lemma}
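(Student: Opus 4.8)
The plan is to re-use, without modification, the graph transformation $G \mapsto G'$ built in the previous lemma for the reduction from \twoclass{2,2} to \twoclass{k,k}: attach a clique $C$ on $k-2$ nodes to the \maxCut{} instance $G$ with the stated multiplicities, and then expand the parallel edges into separate copies of the gadget $K$ to obtain a simple graph. Since neither the node set nor the edge set of $G'$ changes, $|E_{G'}| \in \mathcal{O}(k^4 m)$ still holds, and together with $\texttt{opt}_{\twoclass{2,2}}(G) \geq \tfrac{m}{2}$ this yields the first condition of the L-reduction, $\texttt{opt}_{\twoclass{r,k}}(G') \leq \alpha \cdot \texttt{opt}_{\twoclass{2,2}}(G)$ for some $\alpha \in \mathcal{O}(k^4)$, by exactly the argument used in the $r=k$ case. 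Likewise the explicit ``good'' coloring (colour $C$ with $k-2$ distinct colours, colour $G$ by an optimal cut using two further colours, colour every copy of $K$ conflict-free) shows $\texttt{opt}_{\twoclass{r,k}}(G') \geq |E_{G'}| - m + \texttt{opt}_{\twoclass{2,2}}(G)$, and it is feasible because it produces at most two conflicting colour classes and $r \geq 2$.

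The one point that genuinely needs re-checking is whether the recolouring steps behind the second condition survive the extra constraint that $k-r$ colour classes must be independent. The enabling observation is that a feasible solution of \twoclass{r,k} is simply a $k$-colouring in which at most $r$ colour classes contain a conflict edge. Given an oracle solution $\mathcal{S}$ on $G'$, I would split into the same two cases as before. If two nodes of $C$ receive the same colour, then $\mathcal{S}$ already loses at least $2m$ edges inside the copies of $K$ joining those two nodes, so with $\texttt{opt}_{\twoclass{r,k}}(G') - \texttt{val}_{\twoclass{r,k}}(\mathcal{S}) \geq m \geq \texttt{opt}_{\twoclass{2,2}}(G) - \texttt{val}_{\twoclass{2,2}}(\mathcal{S}')$ the second condition holds for any $\mathcal{S}'$ (e.g.\ the restriction of $\mathcal{S}$ to $G$). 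Otherwise $C$ uses $k-2$ distinct colours and exactly two colours $a,b$ are unused on $C$; then every node $v$ of $G$ coloured with a colour appearing on $C$ is recoloured specifically to $a$, which deletes the $\geq 2 d(v)$ conflicts in the $K$-gadgets between $v$ and its monochromatic clique-neighbour while creating at most $d(v)$ conflicts inside $G$, hence does not decrease $\texttt{val}_{\twoclass{r,k}}$. After doing this for all such $v$, the graph $G$ uses only the colours $\{a,b\}$, the clique $C$ and all $K$-gadgets are conflict-free, so at most two colour classes ($a$ and $b$) carry a conflict; because $r \geq 2$, the result is still a feasible \twoclass{r,k} solution, recovering precisely the situation analysed for \twoclass{k,k}.

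From there the second condition follows with $\beta = 1$: reading the $\{a,b\}$-colouring of $G$ as the cut $\mathcal{S}'$, both $\texttt{opt}_{\twoclass{r,k}}(G')$ and the recoloured value of $\mathcal{S}$ equal $|E_{G'}| - m$ plus the number of edges of $G$ that are cut, so $\texttt{opt}_{\twoclass{2,2}}(G) - \texttt{val}_{\twoclass{2,2}}(\mathcal{S}') = \texttt{opt}_{\twoclass{r,k}}(G') - \texttt{val}_{\twoclass{r,k}}(\mathcal{S})$, and in the degenerate branch the left side is at most $m$ while the right side is at least $m$. I expect the only real obstacle to be the feasibility bookkeeping in the recolouring step: one must check that after moving the nodes of $G$ onto the two off-clique colours the number of conflicting colour classes never exceeds $r$, and this is exactly where $r \geq 2$ is needed — two colours suffice for $G$, and the $k-2$ conflict-free clique colours can absorb all $k-r$ proper colours.
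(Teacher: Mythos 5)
Your proposal follows essentially the same route as the paper: reuse the \twoclass{2,2}-to-\twoclass{k,k} transformation verbatim, dispose of the case where two clique nodes share a color via the $2m$-conflict lower bound, and otherwise push every node of $G$ onto the two colors unused on $C$, yielding $\beta=1$. The one place you genuinely diverge is the feasibility bookkeeping, and your version is arguably cleaner: you observe that by the paper's definition a feasible \twoclass{r,k} solution is just a $k$-coloring with at most $r$ conflicting color classes, so after normalization only the two off-clique classes carry conflicts and $r\geq 2$ settles everything by redesignation. The paper instead keeps the proper/relaxed designation of the oracle's solution fixed and patches it explicitly: it swaps a conflicting proper color used in $G$ with a relaxed color occupying a single clique node, and it separately exhibits a coloring of the gadget $K$ whose single conflict lies on a relaxed class when the gadget's endpoints coincide. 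Your redesignation argument makes both patches unnecessary, since after your recoloring no gadget has equal-colored endpoints. One small point you leave implicit: the claims that recoloring $v$ ``deletes the $\geq 2d(v)$ conflicts'' and that afterwards ``all $K$-gadgets are conflict-free'' require recoloring the gadget interiors as well (the conflicts may sit on interior edges of $K$); this is exactly the normalization from the previous lemma and only increases the value, so it is harmless, but it should be stated.
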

\begin{proof}
    For the first part of the reduction (to the multigraph $G_M'$), we can argue as before that all nodes of $C$ obtain a different color in $\mathcal{S}$. We then use the same technique to ensure that the nodes of $G$ will only get the remaining two colors. This technique still increases the number of covered edges in each step, but might lead to an invalid coloring in the end if one (or both) of the two colors used in $G$ are among the proper colors. However, if one of the proper colors used in $G$ has a conflict, then we can simply swap this proper color with a relaxed color that is currently used only on a single node in $C$. After at most 2 such swaps, we can make sure that both colors used to color $G$ are relaxed colors. We can consider the resulting coloring as the original solution $\mathcal{S}$, and apply the same reduction as before.
    
    As for the transformation to a simple graph, whenever two neighboring nodes $u$ and $v$ have different colors, it is always possible to color the whole copy of graph $K$ between them without any conflicts at all, so it does not matter whether the colors used on the specific nodes of $K$ are proper or relaxed colors. If $u$ and $v$ have the same color $c_1$, then there is at least one relaxed color $c_2 \neq c_1$. We can then color the copy of $K$ by using $c_2$ on nodes $w_1$ and $w_4$, and each of the remaining $k-2$ colors on one of the remaining $k-2$ nodes; in this coloring, the only conflict will be on the edge $(w_1, w_4)$. Thus it is still always possible to color each copy of $K$ with exactly 0 or 1 conflicts (depending on whether $u$ and $v$ have the same color), regardless of the colors of $u$ and $v$ being proper or relaxed.
\end{proof}
Altogether, this shows the following.
\begin{theorem}
    For any $k \geq r \geq 2$, the problem \twoclass{r,k} is APX-complete.
\end{theorem}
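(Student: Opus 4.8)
The plan is to assemble the statement from the two facts established immediately above, so the proof itself will be short. By \Cref{lem:inAPX}, for every $k \geq r \geq 2$ the problem \twoclass{r,k} admits a $\frac{1}{2}$-approximation, hence \twoclass{r,k} $\in$ APX. It therefore only remains to establish APX-hardness.

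For hardness I would use that \twoclass{2,2} is exactly the \maxCut{} problem, which is known to be APX-complete \cite{APXcomplete}; in particular, \maxCut{} is APX-hard. The preceding lemma supplies an L-reduction from \twoclass{2,2} to \twoclass{r,k} for every $r \geq 2$ (and every $k \geq r$). Since L-reductions compose and preserve APX-hardness, composing the APX-hardness of \maxCut{} with this L-reduction shows that \twoclass{r,k} is APX-hard. Combining the two halves, \twoclass{r,k} is both in APX and APX-hard, hence APX-complete, for all $k \geq r \geq 2$.

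The only point that merits a second look is that the L-reduction lemma is phrased uniformly for all $r \geq 2$: its proof first reduces to \twoclass{k,k} on the multigraph $G_M'$, and then patches in the properness restriction via the color-swap argument on the clique $C$ and via the gadget $K$ in the multigraph-to-simple-graph step. I would double-check that this argument indeed covers both the generic range $2 \leq r < k$ and the boundary case $r = k$ — where the statement degenerates to the classical APX-completeness of \maxKCut{}. I do not expect a genuine obstacle here: the boundary case is essentially a consistency check, and the main ``work'' is simply bookkeeping the chain \maxCut{} $\to$ \twoclass{2,2} $\to$ \twoclass{r,k} together with the fact that L-reductions transport both conditions of APX-completeness.
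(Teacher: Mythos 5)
Your proposal is correct and follows essentially the same route as the paper: membership via \Cref{lem:inAPX}, and APX-hardness by composing the APX-hardness of \maxCut{} (i.e., \twoclass{2,2}) with the L-reduction to \twoclass{r,k} established in the preceding lemmas (the $r=k$ boundary case being the already-noted equivalence with \maxKCut{}).
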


\section{Conclusion -- and a Generalization}
In the paper, we analyzed the problem of \Twoclass{r,k}, which aims to maximize the number of covered edges when coloring a graph with $r$ relaxed and $k-r$ proper colors. We have seen that \twoclass{1,2} is similar but not identical to the Maximum Independent Set problem. We have shown that the problem \twoclass{r,k} is NP-complete for any $k \geq 2$ (except when $(r,k) = (0,2)$). Furthermore, we also proved that the problems \twoclass{r,k} with $r \geq 2$ are APX-complete, and that \twoclass{1,k} cannot be approximated to any constant factor in polynomial time.

Throughout this paper, we have established the close relation between the \twoclass{r,k} and the \maxKCut{} problem. Note that the cost function used for the optimization is the same for both problems, while they only differ in their set of feasible solutions.

In practice, however, one may think of other reasonable cost functions. For instance, Defective Coloring exclusively focuses on the node with the highest number of conflicts, and does not evaluate the coloring on the rest of the graph.

Considering application areas, e.g.\ modeling interference in frequency allocation problems, both approaches might be reasonable. A provider would generally want to keep the total number of conflicts in the graph small, as this corresponds to the overall service quality. On the other hand, a provider would probably also try to minimize the maximal number of conflicts at a single node in order to avoid upsetting any particular costumer.

In fact, an interesting generalization allows us to model both settings (and the trade-off between them) at the same time. Given a graph $G = (V,E)$ and a $k$-coloring of $G$, let us denote the number of conflicts at node $v \in V$ (i.e., the number of conflict edges incident to~$v$) by $\kappa(v)$. Then, given a real parameter $p > 0$, we define the \emph{Generalized Conflict Coloring} problem, where the cost of a coloring $\mathcal{S}$ is defined as
\[ \texttt{cost}(\mathcal{S}) \quad := \quad \sum_{v \in V} \kappa(v)^p . \]
The goal of the Generalized Conflict Coloring problem is to find a coloring $\mathcal{S}$ on $k$ colors that minimizes this cost.
This generalized cost function can be combined with any restrictions on the set of feasible colorings, for example, the family of \twoclass{r,k} problems.

In the following, we motivate that this problem is a natural generalization of various known coloring problems. In the case when $p=1$, the aim of the problem is to minimize $\sum_{v \in V} \kappa(v)$, which is simply two times the number of monochromatic edges (conflicts) in the graph. Hence the cost of a coloring $\mathcal{S}$ is exactly $2 \cdot \left(|E| - \texttt{val}_{\twoclass{k,k}}(\mathcal{S}) \right)$, and thus the problem is identical to \Twoclass{k,k} (i.e., to \maxKCut{}).

On the other hand, as $p$ goes to infinity, the sum in the cost will be dominated only by the highest $\kappa(v)$ value in the graph, and thus a coloring $\mathcal{S}_1$ will be better than a coloring $\mathcal{S}_2$ exactly if the maximal $\kappa(v)$ is smaller in $\mathcal{S}_1$. This is precisely what is minimized in Defective Coloring; hence, Defective Coloring is also obtained as a subcase of Generalized Conflict Coloring in the limit for $p \to \infty$.

Finally, if $p \to 0$, the difference between the distinct $\kappa(v)$ values will diminish for all $\kappa(v) > 0$, since all are approaching 1 for $p$ small enough. Thus, in the limit we obtain
\[\lim_{p \to 0} \texttt{cost}(\mathcal{S}) \quad = \quad \sum_{\substack{v \in V \\ \kappa(v)\,>\,0}} 1 . \]
In other words, the cost of a coloring $\mathcal{S}$ becomes the number of nodes for which $\kappa(v) > 0$. Minimizing this corresponds to finding a coloring which leaves the highest possible number of nodes without a conflict.

Furthermore, intermediate values of $p$ allow us to study the trade-off between these problems. For instance, if $p=2$ is selected, then we obtain a setting between \maxKCut{} and Defective Coloring, where we have to optimize both for a small number of total conflicts and a reasonably balanced distribution of these conflicts among nodes. Altogether, we conclude that the further study of this Generalized Conflict Coloring problem is not only interesting from a theoretical point of view, but also motivated by practical applications.


\bibliography{main}

\end{document}